\documentclass[11pt]{article}
\usepackage{amsmath,amsthm,amssymb,amsfonts,mathtools,enumitem,fullpage}
\usepackage{tikz-cd}
\usepackage{hyperref}
\usepackage{ytableau}
\newtheorem{theorem}{Theorem}
\newtheorem{proposition}{Proposition}

\newtheorem{lemma}{Lemma} 
\usepackage{bbm}

\title{Induced Representations in Cooperative Games with Homogeneous Groups of Players}
\author{Windsor Kiang}
\date{May 2026}

\begin{document}
\maketitle

\begin{abstract}
Oftentimes, the Shapley value becomes infeasible for games with many players. However, establishing symmetry allows for polynomial-time computation. To examine this reduction, we identify the spectrum of homogeneous group games by using an induced representation from a Young subgroup.  We then prove that such games are supported solely by irreducible representations, via the Littlewood-Richardson rule, where the depth of interactions is strictly bounded by the size of the minority group. Therefore, the algebraic structure of the game filters out the complexities of the general kernel $W$. We then show that this filtration constrains any symmetric linear value to a specific subspace. This recovers the Shapley value uniquely for $m=2$ under standard axioms. Finally, we explore applications to the UN Security Council and complementary goods markets to illustrate the practical power of this approach.

\textbf{Keywords:} Cooperative Game Theory, Shapley Value, Representation Theory, Symmetric Group, Homogeneous Groups.
\end{abstract}

\section{Introduction}
  \indent
  \indent
  While game theory studies the interactions between players, cooperative game theory is focused on understanding how payouts and costs should be distributed among players who form some sort of ``coalition'', or more generally, a team. The widespread answer to this question is the Shapley Value: a computational method where each player in the game receives the average payout of their marginal contribution to all coalitions they are part of, weighted by coalition size. It is a unique solution that satisfies efficiency, symmetry, linearity, and the null player property.

  However, trouble arises when the number of coalitions grows. Since the Shapley Value is meticulously calculated for $2^n$ coalitions, large $n$ values lead to overwhelming, intractable computation \cite{Hernández2015}. An interesting response, developed in \cite{Hernández2015}, is to group players of identical attributes (from a mathematical standpoint) into homogeneous classes within which players would be indistinguishable. Hernández coins these games ``groups' games.'' Since the worth of any coalition then depends only on how many members it draws from each class, the number of distinct evaluations drops to a polynomial in $n$ \cite{Hernández2015}.

  In this paper, our goal is to understand the abstract underlying concepts that dictate \emph{why} this works, not just \emph{that} it works. Combinatorial arguments give little insight into the geometry of the game space and representations. Hern\'andez-Lamoneda et al. (2007) demonstrated what the game space looks like in the fully symmetric case: the space of transferable utility games (TU-games) $\mathcal{G}$ splits into three canonical invariant pieces under the action of $S_n$. Specifically, it is split into a constant space $C$, an essential space $U$, and a null space $W$. Then, any linear symmetric solution is determined entirely by its action on $C \oplus U$, vanishing on $W$ \cite{Hernández2007}.

  We now extend this idea to games with partial symmetry under a Young subgroup $H = S_{n_1} \times \dots \times S_{n_m}$, corresponding to $m$ groups of sizes $n_1, \dots, n_m$ \cite{Hernández2015}. All group-symmetric games are comprised of irreducible $H$-modules. These modules are basic building blocks which, through Frobenius reciprocity and the Littlewood-Richardson rule, induce up to $S_n$ \cite{Sagan2001}. Analyzing the intersection between this induced decomposition and the aforementioned canonical subspaces $C$, $U$, and $W$ makes apparent which components of the game do and do not carry payoff information \cite{Kleinberg1985}. As part of one of our examples later on, we see that the UN Security Council has a symmetry of $S_5 \times S_{10}$, which reduces a 32{,}767-dimensional problem to one of just dimension 65. This makes it tractable to analyze the efficiency and fairness axioms on a small set of components \cite{Hernández2015}.

  The rest of this paper is organized as follows. Section 2 discusses the preliminaries, setting up the concepts of TU-games, group actions, and representation theory of $S_n$ that will be used hereafter. Section 3 continues by defining homogeneous group games (games that are invariant under a subgroup of the symmetric group) and characterizing $\mathcal{G}_n^H$ as a module of $H$ and of $S_n$ \cite{Hernández2015}. Section 4 then develops the restriction-induction approach where we decompose $H$-invariant games into irreducible $H$-representations and, via Frobenius reciprocity and Littlewood--Richardson coefficients, induce them to $S_n$ \cite{Sagan2001}; this section then identifies how the result splits into the canonical $C$, $U$, and $W$ subspaces. Section 5 shows that symmetric linear values on these subspaces actually vanish on $W$ and act as a scalar on each irreducible piece of $C \oplus U$ \cite{Hernández2007}. Classical axioms are then re-examined through this new lens. Section 6 presents two applications: the UN Security Council and the market for complementary goods. Finally, we conclude in Section 7 with a summary and directions for future work.

\section{Preliminaries}

The following sections fix notation and collect the background needed later. We cover TU-games and the Shapley value, the action of $S_n$ on the game space, and the standard decomposition into irreducible subspaces.

\subsection{Cooperative TU-Games and Solutions}
Let $N = \{1, 2, \dots, n\}$ be a finite set of players. A \textit{cooperative game with transferable utility} (or TU-game) is a pair $(N, v)$, where $v: 2^N \to \mathbb{R}$ is a characteristic function satisfying $v(\emptyset) = 0$. The value $v(S)$ represents the worth or economic potential of the coalition $S \subseteq N$. We write $\mathcal{G}_n$ for the vector space of all such games; since a game is uniquely determined by its values on the $2^n - 1$ non-empty subsets of $N$, this space has dimension $2^n - 1$.

A \textit{solution}, or \textit{value}, is a map $\psi: \mathcal{G}_n \to \mathbb{R}^n$ assigning a payoff vector $\psi(v) \in \mathbb{R}^n$ to every game. We focus on \textit{linear} values, which satisfy $\psi(\alpha v + \beta w) = \alpha \psi(v) + \beta \psi(w)$. The most prominent solution concept is the \textit{Shapley value} $\phi$, defined as:
\begin{equation}
\phi_i(v) = \sum_{S \subseteq N \setminus \{i\}} \frac{|S|!(n-|S|-1)!}{n!} [v(S \cup \{i\}) - v(S)].
\end{equation}
The Shapley value is uniquely characterized by the axioms of \textit{efficiency} ($\sum \phi_i(v) = v(N)$), \textit{symmetry} (substitutable players receive equal payoffs), \textit{linearity}, and the \textit{null player property} (players who contribute zero marginal value to all coalitions receive zero payoff).

\subsection{Group Actions and Representation Theory}
The symmetric group $S_n$ acts on the player set $N$ by permutation, inducing a natural linear action on the space of games $\mathcal{G}_n$. For any permutation $\sigma \in S_n$ and any game $v$, the permuted game $\sigma v$ is defined by:
\begin{equation}
(\sigma v)(S) = v(\sigma^{-1} S)
\end{equation}
for all $S \subseteq N$. This makes $\mathcal{G}_n$ a linear representation of $S_n$---specifically, the permutation representation on the power set $2^N$.

A solution $\psi$ is \textit{symmetric} if it is equivariant under this action, meaning $\psi(\sigma v) = \sigma(\psi(v))$, where the action on the payoff vector is the permutation of coordinates.

Formally, a linear representation of $S_n$ on a vector space $V$ is a homomorphism from the group to $GL(V)$, determining how group elements act as linear transformations. A subspace $W \subseteq V$ is \textit{invariant} (or a submodule) if for every $\sigma \in S_n$ and every $w \in W$, the vector $\sigma w$ remains in $W$.

Representation theory allows us to decompose $\mathcal{G}_n$ into invariant subspaces. An \textit{irreducible representation} (irrep) is an invariant subspace containing no proper invariant subspaces. The irreducible representations of $S_n$ are in one-to-one correspondence with \textit{partitions} $\lambda \vdash n$. We define the \textit{length} of a partition, $\ell(\lambda)$, as the number of parts (rows in the corresponding Young diagram).

As established in \cite{Hernández2007}, the space of games decomposes into three canonical invariant subspaces:
\begin{equation}
\mathcal{G}_n = C \oplus U \oplus W
\end{equation}
where $C \cong n V_{[n]}$ is the isotypic component of the trivial representation (corresponding to games whose value depends only on coalition size), $U$ contains copies of the standard representation $V_{[n-1, 1]}$ (additive games), and $W$ consists of the remaining irreducible components that lie in the kernel of all symmetric linear values \cite{Kleinberg1985}.

In this notation, $V_{\lambda}$ denotes the specific irreducible representation (or Specht module) of $S_n$ indexed by $\lambda$. Specifically, $V_{[n]}$ corresponds to the single-row partition $(n)$, denoting the one-dimensional trivial representation, and $C$ contains $n$ orthogonal copies (one for each coalition size). Similarly, $V_{[n-1,1]}$ corresponds to the partition $(n-1, 1)$, denoting the standard representation of dimension $n-1$, which captures the symmetries of vectors that sum to zero.

\subsection{Homogeneous Groups and the Profile Lattice}
We now consider the specific structure of games with homogeneous players. Let $N$ be partitioned into $m$ disjoint groups $T_{1},\dots,T_{m}$ with sizes $n_{1},\dots,n_{m}$, where $\sum n_{j}=n$. We define the group-symmetry subgroup $H\subseteq S_{n}$ as the direct product of the symmetric groups acting on each part of the partition:
\begin{equation}
H=S_{n_{1}}\times S_{n_{2}}\times\cdots\times S_{n_{m}}
\end{equation}
This is known as a Young subgroup \cite{Sagan2001}, corresponding to the partition $\mu = (n_1, \dots, n_m)$ of $n$. In the context of Young's Rule (Section 4), it is helpful to visualize $\mu$ as a Young diagram, where each row corresponds to a homogeneous group of players. For example, if $n=5$ with groups of sizes $n_1=3$ and $n_2=2$:

\begin{center}
    \ydiagram{3, 2}
\end{center}

A game $v$ is a \textit{homogeneous group game} if it is invariant under the action of $H$. Formally:
\begin{equation}
\mathcal{G}_{n}^{H}=\{v\in\mathcal{G}_{n}\mid\sigma v=v \text{ for all } \sigma\in H\}.
\end{equation}
In representation-theoretic terms, this is the isotypic component of the trivial representation of $H$ within the restriction of $\mathcal{G}_{n}$ to $H$.

For any $v\in\mathcal{G}_{n}^{H}$, the worth of a coalition $S$ depends only on the number of players drawn from each group \cite{Hernández2015}. We define the profile vector $s=(s_{1},\dots,s_{m})$ where $s_{j}=|S\cap T_{j}|$. The set of all valid profiles is the lattice $\Lambda$ in $\mathbb{Z}^{m}$:
\begin{equation}
\Lambda = \{(s_{1},\dots, s_{m}) \in \mathbb{Z}^m \mid 0 \le s_{j} \le n_{j} \}.
\end{equation}
By \cite{Hernández2015}, $v(S)$ is constant on the orbits of $H$ acting on $2^{N}$, and each orbit corresponds uniquely to a profile $s\in\Lambda$. A basis for $\mathcal{G}_{n}^{H}$ is therefore given by the indicator functions of these orbits.

The dimension of this subspace is exactly the number of lattice points:
\begin{equation}
\dim(\mathcal{G}_{n}^{H})=\left[\prod_{j=1}^{m}(n_{j}+1)\right]-1.
\end{equation}
This reduction from $2^n$ dimensions to the polynomial size of $|\Lambda|$ is the key computational advantage identified in \cite{Hernández2015}, which we now analyze through the lens of induced representations.

\section{Homogeneous Group Games and Their Space}
In this section, we characterize the space of homogeneous group games not merely as a collection of functions on a grid, but by identifying the specific irreducible representations of the symmetric group that constitute it.

\subsection{Characterization via Induced Representations}
While $\mathcal{G}_{n}^{H}$ is defined by restriction to $H$, its spectrum of irreducible components is governed by induction to $S_{n}$.

\begin{theorem}\label{thm:induced_rep}
An irreducible representation $V_\lambda$ of $S_n$ appears in the decomposition of the group games space $\mathcal{G}_n^H$ if and only if $V_\lambda$ is a constituent of both the induced representation $\mathrm{Ind}_H^{S_n}(\mathbf{1})$ and the base game space $\mathcal{G}_n$.
\end{theorem}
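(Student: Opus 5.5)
We read the statement as follows: $V_\lambda$ appears in $\mathcal{G}_n^H$ if the $\lambda$-isotypic component of $\mathcal{G}_n$ meets $\mathcal{G}_n^H$ nontrivially. This reading is needed because $\mathcal{G}_n^H$ is an $H$-module rather than an $S_n$-module. The plan is to reduce the claim to a dimension count that factors into two multiplicities, one per condition in the statement.

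\textbf{Step 1: split $\mathcal{G}_n^H$ along isotypic components.} Decompose $\mathcal{G}_n$ into $S_n$-isotypic components, $\mathcal{G}_n=\bigoplus_{\lambda\vdash n} I_\lambda$ with $I_\lambda\cong V_\lambda^{\oplus a_\lambda}$, where $a_\lambda$ is the multiplicity of $V_\lambda$ in $\mathcal{G}_n$. Each $I_\lambda$ is $S_n$-stable, hence $H$-stable. Taking $H$-fixed vectors commutes with direct sums, so
\begin{equation*}
\mathcal{G}_n^H=\bigoplus_{\lambda\vdash n} I_\lambda^H, \qquad I_\lambda^H\cong (V_\lambda^H)^{\oplus a_\lambda}.
\end{equation*}
In particular $I_\lambda\cap\mathcal{G}_n^H=I_\lambda^H$ has dimension $a_\lambda\cdot\dim V_\lambda^H$.

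\textbf{Step 2: identify $\dim V_\lambda^H$ by Frobenius reciprocity.} We have $\dim V_\lambda^H=\langle \mathrm{Res}^{S_n}_H V_\lambda,\mathbf{1}\rangle_H$. Frobenius reciprocity turns this into $\langle V_\lambda,\mathrm{Ind}_H^{S_n}\mathbf{1}\rangle_{S_n}$, which is the multiplicity of $V_\lambda$ in $\mathrm{Ind}_H^{S_n}(\mathbf{1})$. For the Young subgroup $H$ of type $\mu$, this multiplicity is the Kostka number $K_{\lambda\mu}$ by Young's rule, though only the multiplicity interpretation is needed here.

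\textbf{Step 3: conclude.} Combining the two steps gives $\dim(I_\lambda\cap\mathcal{G}_n^H)=a_\lambda\,K_{\lambda\mu}$. This is nonzero exactly when $a_\lambda>0$ and $K_{\lambda\mu}>0$. The first condition says $V_\lambda$ is a constituent of $\mathcal{G}_n$, and the second says it is a constituent of $\mathrm{Ind}_H^{S_n}(\mathbf{1})$, which proves both directions at once. As a by-product, summing $a_\lambda K_{\lambda\mu}$ over $\lambda$ must give $\prod_j(n_j+1)-1$. This can serve as a consistency check against the lattice count of $\mathcal{G}_n^H$.

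\textbf{Main obstacle.} The delicate point is the meaning of ``appears in the decomposition of $\mathcal{G}_n^H$''. It has to be fixed precisely so that the isotypic splitting in Step 1 is legitimate. Concretely, one must check that the $H$-fixed part of an isotypic component is exactly its intersection with $\mathcal{G}_n^H$. One must also check that fixed vectors of $V_\lambda^{\oplus a}$ are $(V_\lambda^H)^{\oplus a}$, which holds because $H$ acts diagonally on the direct sum. With that settled, the remaining steps are standard.
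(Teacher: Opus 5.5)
Your proof is correct and follows the same route as the paper: Frobenius reciprocity identifies $\dim V_\lambda^H$ with the multiplicity of $V_\lambda$ in $\mathrm{Ind}_H^{S_n}(\mathbf{1})$, and membership in $\mathcal{G}_n$ supplies the second condition. Your isotypic splitting $\mathcal{G}_n^H=\bigoplus_\lambda I_\lambda^H$ with $\dim I_\lambda^H=a_\lambda K_{\lambda\mu}$ makes explicit the bookkeeping that the paper treats informally, in particular the ``if'' direction and the independence from the choice of copy of $V_\lambda$.
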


\begin{proof} The space of homogeneous group games $\mathcal{G}_n^H$ consists of the $H$-invariant vectors in $\mathcal{G}_n$. An irreducible component $V_\lambda$ contributes to this subspace if and only if its restriction to $H$ contains a copy of the trivial representation $\mathbf{1}_H$. The dimension of the subspace of invariants in $V_\lambda$ is given by the intertwining number:
\begin{equation}
\dim \mathrm{Hom}_H(\mathbf{1}_H, \mathrm{Res}_H^{S_n}(V_\lambda)).
\end{equation}
By Frobenius Reciprocity, this is adjoint to induction:
\begin{equation}
\dim \mathrm{Hom}_H(\mathbf{1}_H, \mathrm{Res}_H^{S_n}(V_\lambda)) = \dim \mathrm{Hom}_{S_n}(\mathrm{Ind}_H^{S_n}(\mathbf{1}_H), V_\lambda).
\end{equation}
The right-hand side is non-zero if and only if $V_\lambda$ appears in the decomposition of $\mathrm{Ind}_H^{S_n}(\mathbf{1})$. Because $\mathcal{G}_n^H \subset \mathcal{G}_n$, any permissible representation must also be a constituent of the base game space. The induced representation therefore acts as a ``spectral filter,'' selecting exactly which symmetry types are permissible in group games.
\end{proof} 

\subsection{The Isomorphism to the Profile Lattice}
The connection between Theorem \ref{thm:induced_rep} and the combinatorial profile lattice $\Lambda$ from Section 2 is central. Theorem \ref{thm:induced_rep} identifies which \textit{types} of symmetries are present, but the actual dimension of the space is determined by the orbit structure.

The space decomposes orthogonally by coalition size $k$:
\begin{equation}
\mathcal{G}_n^H = \bigoplus_{k=1}^n \mathcal{W}_k^H
\end{equation}
where $\mathcal{W}_k^H$ is the subspace of $H$-invariant games supported on coalitions of size $k$. The dimension of each $\mathcal{W}_k^H$ is exactly the number of solutions to $s_1 + \dots + s_m = k$ in the lattice $\Lambda$ \cite{Hernández2015}.

Summing over all $k$ recovers the dimension formula derived combinatorially in the literature:
\begin{equation}
\dim(\mathcal{G}_n^H) = |\Lambda| - 1 = \left[ \prod_{j=1}^m (n_j + 1) \right] - 1.
\end{equation}
The next section applies Young's Rule to Theorem \ref{thm:induced_rep} to explicitly identify the allowed partitions $\lambda$, supplying the representation-theoretic reason for this complexity reduction.

\section{The Induced Decomposition and Subspace Intersection}

The central question is which irreducible components of $\mathcal{G}_n$ are ``visible'' within the symmetric subspace $\mathcal{G}_n^H$. We do not treat $\mathcal{G}_n^H$ as a representation of $S_n$---the full group does not preserve invariance under $H$---but rather identify which $S_n$-invariant subspaces $V_\lambda \subset \mathcal{G}_n$ contain non-zero $H$-invariant vectors.

\subsection{The Spectrum of Group Games}
Let $\mathcal{G}_{n}=\bigoplus_{\lambda\vdash n}m_{\lambda}V_{\lambda}$ be the decomposition of the full game space into irreducible $S_n$-modules $V_{\lambda}$, where $m_{\lambda}$ is the multiplicity. By Theorem \ref{thm:induced_rep}, a component $V_{\lambda}$ contributes to the group game space if and only if its restriction to $H$ contains the trivial representation.

It follows that the ``spectrum'' of irreducible representations present in homogeneous group games is exactly the set of constituents of the permutation module $M^\mu$ that also naturally exist within $\mathcal{G}_n$. We formally define this module as the induced representation from the Young subgroup $H = S_{n_1} \times \cdots \times S_{n_m}$:
\begin{equation}
M^\mu := \mathrm{Ind}_{H}^{S_{n}}(\mathbf{1}).
\end{equation}
Here, the partition $\mu = (n_1, \dots, n_m)$ corresponds to the group sizes.

\subsection{Young's Rule and the Minority Bound}
The decomposition of the induced module $M^\mu$ is governed by Young's Rule (a specific case of the Littlewood-Richardson rule). For a Young subgroup $H = S_{n_1} \times \dots \times S_{n_m}$, the multiplicity of $V_\lambda$ in $M^\mu$ is the Kostka number $K_{\lambda \mu}$, which counts the number of semistandard Young tableaux of shape $\lambda$ and content $\mu$ \cite{Sagan2001}.

As established by Hern\'{a}ndez-Lamoneda et al.\ (2007), the full game space $\mathcal{G}_n$ contains only two-part partitions $V_{[n-k, k]}$ for $0 \le k \le \lfloor n/2 \rfloor$. A fundamental property of Kostka numbers then places a strict bound on which of these components survive in group games.

\begin{theorem}\label{thm:minority_bound}
For a two-group game ($m=2$), an irreducible representation $V_{[n-k, k]}$ contains an $H$-invariant vector (and thus appears in $\mathcal{G}_n^H$) only if the length of the second row satisfies $k \le n_{\min}$, where $n_{\min} = \min(n_1, n_2)$ is the size of the minority group.
\end{theorem}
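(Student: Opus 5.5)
The plan is to combine Theorem \ref{thm:induced_rep} with Young's Rule and reduce the question to a statement about Kostka numbers. By Theorem \ref{thm:induced_rep} and Frobenius reciprocity, the dimension of $H$-invariants in a copy of $V_{[n-k,k]}$ equals $\dim \mathrm{Hom}_{S_n}(M^\mu, V_{[n-k,k]})$. By Young's Rule this is the Kostka number $K_{[n-k,k],\mu}$, where $\mu$ is the partition $(n_1,n_2)$ after reordering so that the parts are weakly decreasing. The component therefore contains a nonzero $H$-invariant vector exactly when $K_{[n-k,k],\mu} \neq 0$. It suffices to show that this forces $k \le n_{\min}$.

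The cleanest route is the dominance criterion: $K_{\lambda\mu} \neq 0$ if and only if $\lambda \trianglerighteq \mu$. For $\lambda = (n-k,k)$ and $\mu = (n_{\max}, n_{\min})$, dominance requires $n-k \ge n_{\max}$, that is, $k \le n - n_{\max} = n_{\min}$. The equality on total size is automatic.

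To keep the argument self-contained, I would instead argue directly with semistandard tableaux. Take a semistandard tableau of shape $(n-k,k)$ and content $(n_{\max}, n_{\min})$, so the entries are $n_{\max}$ ones and $n_{\min}$ twos.
\begin{enumerate}[label=(\roman*)]
\item Columns strictly increase, so every entry in the second row sits below a smaller entry in the first row.
\item Hence no $1$ can appear in row two, so all $k$ entries of row two equal $2$.
\item Therefore $k \le n_{\min}$.
\end{enumerate}
Conversely, when $k \le n_{\min}$ and $k \le n/2$, I would exhibit the tableau explicitly: place $n_{\max}$ ones followed by $n_{\min}-k$ twos in row one, and $k$ twos in row two. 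This shows $K=1$, which is not needed for the ``only if'' direction but confirms the bound is sharp.

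The main obstacle is not the combinatorics but the bookkeeping when passing from abstract multiplicities to the concrete subspace $\mathcal{G}_n^H$. The question is whether $\mathcal{G}_n^H \cap V$ can be nonzero for a copy $V$ of $V_{[n-k,k]}$ inside $\mathcal{G}_n$. The $H$-invariants of any $S_n$-module $V \cong V_\lambda$ have dimension $\dim \mathrm{Hom}_H(\mathbf{1}, \mathrm{Res}\, V_\lambda)$, independent of the embedding. So for each isotypic copy this vanishes whenever $K_{\lambda\mu}=0$. Since $\mathcal{G}_n^H$ is the direct sum of the $H$-invariants of its isotypic pieces, the result follows.

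I also take as given the fact from \cite{Hernández2007}, as quoted in the paper, that $\mathcal{G}_n$ contains only two-row partitions. That is what makes $[n-k,k]$ the only case to handle.
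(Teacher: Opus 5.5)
Your proof is correct and is essentially the paper's argument: via Young's Rule the question becomes whether $K_{[n-k,k],\mu}\neq 0$, and column-strictness forces every second-row box to carry the minority index, which gives $k \le n_{\min}$ (and the same unique tableau shows the multiplicity is $1$ when $k \le n_{\min}$). Your dominance-order shortcut and your remark that $H$-invariants split over the isotypic components are correct extras that the paper leaves implicit.
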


\begin{proof}
To form a valid semistandard Young tableau of shape $[n-k, k]$ with content $(n_1, n_2)$, entries must strictly increase down columns. The $k$ boxes in the second row must therefore be filled entirely with the index of the minority group (assuming without loss of generality that the minority group corresponds to index 2). Since the content only provides $n_{\min}$ copies of this index, the tableau can only be constructed if $k \le n_{\min}$. Simultaneously, the entries directly above in the first row are forced to carry the majority group index; this requires $k \le n_{\text{majority}}$, which is automatically satisfied since $k \le n_{\min} \le n_{\text{majority}}$. When $k \le n_{\min}$, the tableau is unique, yielding a multiplicity of exactly 1.
\end{proof}

This proves that the algebraic complexity of a group game is strictly bounded by the size of its minority group, mathematically collapsing the dimensions of the strategic interactions.

\subsection{Intersection with the Canonical Subspaces}
We now map this filtration onto the canonical dissection $\mathcal{G}_n = C \oplus U \oplus W$ established in \cite{Hernández2007}. Since the full game space is graded by coalition sizes $k \in \{1, \dots, n\}$, we analyze the multiplicities across all $n$ slices of the space.

Before analyzing each canonical subspace, we establish the multiplicity of the standard representation in each coalition-size slice.

\begin{lemma}\label{lem:std-mult}
For each $0 < k < n$, the permutation module $W_k$ contains the standard representation $V_{[n-1,1]}$ with multiplicity exactly~$1$.
\end{lemma}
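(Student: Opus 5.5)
We identify $W_k$, the space of functions on the $k$-element subsets of $N$, with the permutation module $M^{(n-k,k)}$ when $k \le n/2$. For $k > n/2$, complementation $S \mapsto N\setminus S$ is an $S_n$-equivariant bijection between $k$-subsets and $(n-k)$-subsets, so it identifies $W_k$ with $M^{(k,n-k)}$. By symmetry it therefore suffices to treat $1 \le k \le n/2$ and set $\mu = (n-k,k)$. In particular, $W_k \cong \mathrm{Ind}_{S_{n-k}\times S_k}^{S_n}(\mathbf{1})$: the stabilizer of a fixed $k$-set is the Young subgroup $S_{n-k}\times S_k$, and $S_n$ acts transitively on $k$-subsets.

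We then apply Young's Rule, as used in Section 4.2. The multiplicity of $V_{[n-1,1]}$ in $M^{\mu}$ is the Kostka number $K_{(n-1,1),\mu}$. We count semistandard tableaux of shape $(n-1,1)$ and content $(n-k,k)$, with entries in $\{1,2\}$, consisting of $n-k$ ones and $k$ twos. The single box in the second row sits below a $1$, so it must contain $2$; otherwise it would equal the entry above it, violating column strictness. Since $k\ge 1$, such a $2$ is available. The first row must be weakly increasing, so it is forced to be $n-k$ ones followed by $k-1$ twos. This gives exactly one tableau, and $K_{(n-1,1),\mu}=1$. The same argument, or the familiar decomposition $M^{(n-k,k)} \cong \bigoplus_{j=0}^{k} V_{[n-j,j]}$, confirms that $V_{[n-1,1]}$ occurs exactly once.

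As an independent sanity check that avoids tableaux, we can compute $\langle \chi_{W_k}, \chi_{V_{[n-1,1]}}\rangle$ using $\chi_{V_{[n-1,1]}} = \chi_{M^{(n-1,1)}} - \mathbf{1}$. By Mackey's formula or Burnside counting, $\langle \chi_{M^{(n-k,k)}}, \chi_{M^{(n-1,1)}}\rangle$ equals the number of $S_n$-orbits on pairs (a $k$-set $S$, a point $i$). There are $2$ such orbits, according to whether $i\in S$ or $i\notin S$, and both are nonempty because $0<k<n$. Also $\langle \chi_{M^{(n-k,k)}}, \mathbf{1}\rangle = 1$ by transitivity. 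The multiplicity is therefore $2-1=1$. I would include this computation because it handles all $0<k<n$ uniformly, with no need to invoke complementation.

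The main obstacle is not the computation but the bookkeeping. We must state clearly that $W_k$ here denotes the coalition-size slice, not the kernel $W$. We must also justify the identification of $W_k$ with an induced module, including the reindexing by complementation when $k>n/2$, since Young's Rule requires $\mu$ to be a partition. The orbit-counting argument sidesteps the reindexing, so I would lead with it and present the Kostka computation as the connection to Theorem \ref{thm:minority_bound}.
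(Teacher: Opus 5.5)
Your proposal is correct, and your orbit-counting argument is in substance the paper's proof in different language. The paper applies Frobenius reciprocity to reduce to the dimension of the $(S_k\times S_{n-k})$-fixed vectors in $V_{[n-1,1]}$. It then gets $2-1$ by restricting $\mathbb{R}^n=V_{[n]}\oplus V_{[n-1,1]}$ and noting that $S_k\times S_{n-k}$ has two orbits on $N$. Your count of $S_n$-orbits on pairs $(S,i)$ enumerates the same double cosets $(S_k\times S_{n-k})\backslash S_n/S_{n-1}$, only packaged via Burnside/Mackey rather than reciprocity.

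The Kostka computation is a genuinely different route. It plugs directly into Young's Rule as used in Section~4.2. Through the full decomposition $W_k\cong\bigoplus_{j=0}^{\min(k,n-k)}V_{[n-j,j]}$, it gives the multiplicity of every two-row irreducible in each slice, not just the standard one. The price is that Young's Rule is used as a black box. You also need the complementation step, or the fact that $K_{\lambda\alpha}$ is unchanged by reordering the content $\alpha$, to make $\mu$ a partition.

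The paper's route, like your orbit count, is uniform in $0<k<n$ and needs nothing beyond $\mathbb{R}^n=V_{[n]}\oplus V_{[n-1,1]}$. It also generalizes verbatim: a Young subgroup with $m$ nonempty blocks has $m$ orbits on $N$, so $V_{[n-1,1]}$ occurs in $M^\mu$ with multiplicity $m-1$. That independently confirms the Kostka value $K_{[n-1,1],\mu}=m-1$ the paper uses later.
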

 
\begin{proof}
The symmetric group $S_n$ acts transitively on the set $\binom{N}{k}$ of $k$-element subsets of~$N$; the stabilizer of any single subset is $S_k \times S_{n-k}$. By the orbit--stabilizer correspondence, the permutation module on $k$-subsets is the induced representation
\[
W_k \;\cong\; \operatorname{Ind}_{S_k \times S_{n-k}}^{S_n} \mathbb{1}.
\]
Applying Frobenius reciprocity,
\[
\bigl\langle V_{[n-1,1]},\; W_k \bigr\rangle_{S_n}
\;=\;
\bigl\langle \operatorname{Res}_{S_k \times S_{n-k}}\, V_{[n-1,1]},\; \mathbb{1} \bigr\rangle_{S_k \times S_{n-k}}.
\]
To evaluate the right-hand side, recall that $\mathbb{R}^n = V_{[n]} \oplus V_{[n-1,1]}$ as $S_n$-modules. Restricting $\mathbb{R}^n$ to $S_k \times S_{n-k}$, the standard basis $\{e_1, \dots, e_n\}$ splits into two orbits, yielding exactly two linearly independent invariant vectors (the partial sums $\sum_{i \le k} e_i$ and $\sum_{i > k} e_i$). Hence
\[
\bigl\langle \operatorname{Res}_{S_k \times S_{n-k}}\, \mathbb{R}^n,\; \mathbb{1} \bigr\rangle_{S_k \times S_{n-k}} = 2.
\]
Since $V_{[n]}$ contributes one copy of the trivial, we subtract to obtain
\[
\bigl\langle \operatorname{Res}_{S_k \times S_{n-k}}\, V_{[n-1,1]},\; \mathbb{1} \bigr\rangle_{S_k \times S_{n-k}} = 2 - 1 = 1. \qedhere
\]
\end{proof}

\textbf{1. The Constant Space $C$:}
This corresponds to the trivial partition $\lambda = [n]$ (where $k=0$). Since $0 \le n_{\min}$, it is always fully $H$-invariant. It appears exactly once for each coalition size, meaning the space of symmetric games $C$ has dimension $n$. Its intersection with group games is complete: $\dim(\mathcal{G}_n^H \cap C) = n$.

\textbf{2. The Essential Space $U$:}
This corresponds to the standard partition $\lambda = [n-1, 1]$ (where $k=1$). By Lemma~\ref{lem:std-mult}, the standard representation appears exactly $n-1$ times (once for each $0 < k < n$). By Young's Rule, assuming $n_{\min} \ge 1$, the multiplicity of $V_{[n-1,1]}$ in the induced module $M^\mu$ is the Kostka number $K_{[n-1,1],\mu} = m-1$. This means that within a \textit{single} copy of the standard representation, the $H$-invariant subspace has dimension $m-1$. Because there are $n-1$ copies total, the essential space for group games has dimension:
\begin{equation}
\dim(\mathcal{G}_n^H \cap U) = (n-1)(m-1).
\end{equation}

\textbf{3. The Null Space $W$:}
The kernel $W$ consists of all $\lambda \notin \{[n], [n-1, 1]\}$. In the general game space, $W$ contains deep interaction components up to $k = \lfloor n/2 \rfloor$. Theorem \ref{thm:minority_bound} shows that group symmetry severely truncates this: in the UN Security Council ($n=15$, $n_{\min}=5$), the general space has interactions up to $k=7$, but group symmetry algebraically prohibits the $k=6$ and $k=7$ components. The group-symmetric kernel is restricted to the ``shallow'' components of $W$ (specifically $[n-k, k]$ for $2 \le k \le n_{\min}$).

This structural filtration explains why symmetric values are so effective on group games: the complex higher-order interactions that usually plague solution concepts are algebraically ruled out by the constraints of the symmetry group $H$.

\section{Symmetric Linear Values and Axioms}

The decomposition derived in the previous section provides more than a structural classification; it dictates the behavior of any fair solution concept. We prove that any symmetric linear value on group games is strictly constrained by the invariant subspaces $C$, $U$, and $W$. Re-reading classical axioms through this algebraic lens reveals that ``fairness'' is mathematically equivalent to fixing specific scalars on the irreducible components of the game.

\subsection{The Vanishing of the Kernel $W$}
A central result in the representation theory of cooperative games is that the ``essential'' information for symmetric values is contained entirely within the isotypic components of irreducibles occurring in the natural representation of $S_n$ on $\mathbb{R}^n$ (namely $C$ and $U$).

\begin{proposition}\label{prop:vanishing_kernel}
Let $\psi: \mathcal{G}_n \to \mathbb{R}^n$ be any linear symmetric value. Then for any game $w \in W \cap \mathcal{G}_n^H$, we have $\psi(w) = \mathbf{0}$.
\end{proposition}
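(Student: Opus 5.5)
The plan is to apply Schur's lemma to the $S_n$-equivariant linear map $\psi$, and only afterwards to use the hypothesis $w \in \mathcal{G}_n^H$. The $H$-invariance plays no role in the vanishing itself: it only guarantees that $W \cap \mathcal{G}_n^H$ is a subspace of $W$. So I would prove the stronger statement that $\psi|_W = 0$ and then restrict to $W \cap \mathcal{G}_n^H$.

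\textbf{Step 1 (equivariance).} Symmetry of $\psi$ means $\psi(\sigma v) = \sigma\,\psi(v)$ for all $\sigma \in S_n$. Together with linearity, this makes $\psi$ an element of $\mathrm{Hom}_{S_n}(\mathcal{G}_n, \mathbb{R}^n)$.

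\textbf{Step 2 (decompose both sides).} On the target side I would use $\mathbb{R}^n = V_{[n]} \oplus V_{[n-1,1]}$, as recalled in the proof of Lemma~\ref{lem:std-mult}. On the source side, $W$ is by definition the sum of the isotypic components of $\mathcal{G}_n$ for the partitions $\lambda \notin \{[n],[n-1,1]\}$. By the Hern\'andez-Lamoneda et al.\ description quoted in Section~4, these are the $V_{[n-k,k]}$ with $2 \le k \le \lfloor n/2 \rfloor$. I would therefore write
\[
W = \bigoplus_{\lambda \neq [n],[n-1,1]} W_\lambda ,
\]
where each $W_\lambda$ is a direct sum of copies of $V_\lambda$.

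\textbf{Step 3 (Schur's lemma).} Restricting $\psi$ to any single copy $V_\lambda \subseteq W_\lambda$ gives an $S_n$-map $V_\lambda \to V_{[n]} \oplus V_{[n-1,1]}$. Composing with the two projections produces maps $V_\lambda \to V_{[n]}$ and $V_\lambda \to V_{[n-1,1]}$. Each of these is zero by Schur's lemma, because the Specht modules are pairwise non-isomorphic and absolutely irreducible over $\mathbb{R}$, so the argument holds over the reals as well. It follows that $\psi$ vanishes on every copy, hence on $W$ by linearity, and hence on $W \cap \mathcal{G}_n^H$.

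\textbf{Main obstacle.} The only delicate point is ensuring that the decomposition $\mathcal{G}_n = C \oplus U \oplus W$ is genuinely isotypic. Concretely, $W$ must contain no hidden copies of $V_{[n]}$ or $V_{[n-1,1]}$, so that $C$ and $U$ exhaust those isotypic components. I would settle this by citing \cite{Hernández2007} directly, where $C$ and $U$ are defined as the full isotypic components. If needed, I would cross-check it with Lemma~\ref{lem:std-mult}, since the $n-1$ copies of $V_{[n-1,1]}$ are exactly those counted in $U$.

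Finally, I would note that Theorem~\ref{thm:minority_bound} implies $W \cap \mathcal{G}_n^H$ involves only $2 \le k \le n_{\min}$. This shows which components are actually being annihilated, but it is not needed for the proof.
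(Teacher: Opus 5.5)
Your proposal is correct and is essentially the paper's own argument: symmetry plus linearity puts $\psi$ in $\mathrm{Hom}_{S_n}(\mathcal{G}_n,\mathbb{R}^n)$, Schur's lemma kills every irreducible constituent of $W$ since $\mathbb{R}^n\cong V_{[n]}\oplus V_{[n-1,1]}$, and $H$-invariance is used only to pass to the subspace $W\cap\mathcal{G}_n^H$. Your explicit check that $C$ and $U$ exhaust the $V_{[n]}$- and $V_{[n-1,1]}$-isotypic components (via Lemma~\ref{lem:std-mult}) makes precise what the paper takes from the definition of $W$; two minor points: absolute irreducibility is not needed for Schur's lemma between non-isomorphic irreducibles, and your closing remark about $2\le k\le n_{\min}$ holds only for $m=2$, the case covered by Theorem~\ref{thm:minority_bound}.
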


\begin{proof} As established in \cite{Hernández2007}, the space of all linear symmetric values is isomorphic to $\mathrm{Hom}_{S_n}(\mathcal{G}_n, \mathbb{R}^n)$. The target space $\mathbb{R}^n$ decomposes into only two irreducibles: the trivial representation $V_{[n]}$ and the standard representation $V_{[n-1, 1]}$. The subspace $W$ consists of irreducible components $V_\lambda$ where $\lambda \notin \{[n], [n-1, 1]\}$. By Schur's Lemma, there are no non-zero equivariant maps between non-isomorphic irreducible representations \cite{Fulton1991}. Consequently, any symmetric $\psi$ must map every component of $W$ to zero in $\mathbb{R}^n$.
Since group-symmetric games form a subspace $\mathcal{G}_n^H \subset \mathcal{G}_n$, the restriction of $\psi$ to the higher-order interaction components ($k \ge 2$ in $W$) remains zero. 
\end{proof}

This result implies that for group games, any interaction effects exceeding the structural bounds of the minority group are ``invisible'' to symmetric solutions. They represent pure ``noise'' filtered out by the requirement of fairness \cite{Kleinberg1985}.

\subsection{Scalar Action on $C$ and $U$}
With $W$ eliminated, a symmetric value on $\mathcal{G}_n^H$ is fully determined by its action on the intersection of the group game space with $C$ and $U$.

\begin{itemize}
    \item \textbf{Action on $C$:} The constant space consists of $n$ copies of the trivial representation, one for each coalition size. A symmetric linear value therefore requires $n$ independent parameters $(\alpha_1, \dots, \alpha_n)$ to specify its behavior on $C$ across the $n$ coalition sizes.
    \item \textbf{Action on $U$:} The full essential space $U$ consists of $n-1$ copies of the standard representation. By Schur's Lemma, $\psi$ must act as a scalar multiple on each isomorphic copy, requiring $n-1$ parameters $(\beta_1, \dots, \beta_{n-1})$ across coalition sizes \cite{Hernández2007}.
\end{itemize}

Crucially, while the number of parameters defining the \textit{value} remains dependent on $n$, the \textit{game vector} it acts upon is severely restricted. Within each $k$-sized slice of the game, the essential strategic information is confined to the $m-1$ invariant dimensions. This shows that calculating the value requires tracking only how the $n-1$ scalar parameters scale the $(n-1)(m-1)$ essential degrees of freedom, rather than the exponential dimensions of the full space \cite{Dubey1981, Amer2003}.

\subsection{Axioms as Algebraic Constraints}
Standard game-theoretic axioms translate directly into linear constraints on these scalars.

\textbf{1. Efficiency:}
The efficiency axiom requires $\sum \psi_i(v) = v(N)$. Because the constant space $C$ is parameterized by $n$ independent scalars (one for each coalition size), efficiency alone does not uniquely determine the value on $C$; it imposes a single linear constraint on these $n$ parameters, ensuring the total payoff sums to $v(N)$. Efficiency places no constraint on the components in $U$ or $W$: by construction, these subspaces are orthogonal to $C$, so for any $v \in U \oplus W$, the total worth of the grand coalition vanishes ($v(N)=0$). This reflects the standard game-theoretic concept of ``0-normalization,'' where total value is separated from the strategic distribution \cite{Hernández2007}.

\textbf{2. The Null Player Property:}
A null player contributes zero marginal value. In the language of vector spaces, the null player axiom forces the solution to respect the internal structure of the standard representation. Applying it to the unanimity games $e_S$ for each coalition size $|S|=k$ yields a system of linear equations. Because a null player in $e_S$ contributes zero to all coalitions not containing $S$, the value assigns them zero payoff. As demonstrated in \cite{Hernández2007}, enforcing this condition algebraically forces all $n-1$ scalar parameters $\beta_k$ on $U$ to equal $1$, while simultaneously fixing the remaining $n-1$ degrees of freedom in $C$.

\textbf{3. Conclusion:}
Imposing linearity, efficiency, and the null player property removes all $2n-1$ degrees of freedom across $C$ and $U$. The unique solution remaining is the Shapley value. Relaxing efficiency yields the broader class of \textit{semivalues} (such as the Banzhaf value), which can be understood as ``weighted'' projections onto the irreducible components of $\mathcal{G}_n^H$ \cite{Dubey1981}.

\section{Applications and Case Studies}

We now apply the algebraic decomposition to two distinct classes of problems: weighted voting in political bodies and the formation of economic partnerships. Both follow the general approach of \cite{Hernández2015}, who carried out a similar analysis, and they illustrate concretely how group symmetry simplifies the treatment of power and value distributions.

\subsection{The UN Security Council}
We first consider the United Nations Security Council (UNSC). The council has $n=15$ members partitioned into two homogeneous groups:
\begin{itemize}
    \item \textbf{Group 1 (Permanent Members):} $n_1 = 5$ members (China, France, Russia, UK, USA) with veto power. (This serves as our minority group, $n_{\min} = 5$.)
    \item \textbf{Group 2 (Non-Permanent Members):} $n_2 = 10$ rotating members without veto power.
\end{itemize}
A resolution passes if it receives at least 9 votes in total, including all 5 votes from Group 1. The game $v_{UN}$ is defined by $v(S) = 1$ if the coalition $S$ is winning, and $0$ otherwise.

The symmetry group is $H = S_5 \times S_{10}$. The dimension of the full game space is $2^{15} - 1 = 32{,}767$, while the dimension of the $H$-invariant subspace $\mathcal{G}_{15}^H$ is only $(5+1)(10+1) - 1 = 65$. This reduction allows us to compute the Shapley value using the lattice coordinates alone.

We analyze the critical boundaries within the profile lattice $\Lambda$. A non-permanent member is pivotal only at the specific lattice transition from $s=(5,3)$ to $s=(5,4)$, where the coalition has all 5 veto holders and exactly 3 non-permanent members. In contrast, a permanent member is pivotal at any transition from $(4, k)$ to $(5, k)$ for $k \in \{4, \dots, 10\}$.

Summing the probability weights of these lattice paths gives the exact Shapley values: $\phi_{perm} \approx 0.1963$ and $\phi_{non-perm} \approx 0.0018$. Permanent members constitute only $33\%$ of the council, yet the structural rigidity of the veto grants them $98.15\%$ of the total voting power.

Decomposing the game into its $C$ and $U$ components recovers the classical result that veto power grants each permanent member approximately $19.6\%$ of total power, while non-permanent members hold only $0.18\%$ each \cite{Hernández2015}. The ``interaction'' components in $W$ reveal that the veto structure suppresses almost all higher-order synergies between non-permanent members.

\subsection{The Market for Complementary Goods}
A more ubiquitous application arises in supply chains and assembly problems, classically modeled as the ``Glove Game.'' Consider a market with $n$ agents in two groups:
\begin{itemize}
    \item \textbf{Group L (Left Holders):} $n_L$ agents, each holding one unit of a left-hand component (e.g., a chassis).
    \item \textbf{Group R (Right Holders):} $n_R$ agents, each holding one unit of a right-hand component (e.g., an engine).
\end{itemize}
Value is created only when a complete pair is assembled. The characteristic function is:
\begin{equation}
v(S) = \min(|S \cap L|, |S \cap R|).
\end{equation}
This game is invariant under $H = S_{n_L} \times S_{n_R}$. Unlike the voting game, which is defined by a threshold, this game is driven by strictly complementary matching.

\textbf{Decomposition Analysis:}
If $n_L < n_R$, the left holders are the ``scarce'' resource. In the canonical decomposition $\mathcal{G}_n = C \oplus U \oplus W$:
\begin{enumerate}
    \item The constant component $C$ carries relatively little of the game's strategic content compared to $U$, since the game's value depends strongly on coalition composition rather than merely coalition size.
    \item The essential component $U$ (the standard representation) carries the weight of the scarcity. The Shapley value assigns significantly higher payoff to the Group L members.
    \item The kernel component $W$ (along with parts of $U$) captures the ``waste.'' Any coalition with excess right gloves (e.g., 3 Left, 10 Right) has a value of 3; the 7 excess right gloves contribute to components in $U$ and $W$. Symmetric values explicitly nullify the $W$ contribution.
\end{enumerate}
Our representation-theoretic framework explains the extreme behavior of this market: as $n_R$ increases relative to $n_L$, the game vector rotates almost entirely into the subspace spanned by Group L's basis vectors in $U$. The decomposition proves that the ``excess'' agents in Group R mathematically become null players in the limit---a result that is computationally tedious to establish via combinatorics but immediate via the subspace projection.

\subsection{Computational Complexity}
The most practical implication of our result is computational. Calculating the Shapley value strictly via permutations requires $O(n!)$ operations, though this can be reduced to $O(2^n)$ by summing over coalitions. For a group-symmetric game with $m$ groups, our representation-theoretic framework allows us to compute values by summing only over the lattice points $\Lambda$.

The complexity reduces to:
\begin{equation}
O(n^m).
\end{equation}
For the UNSC example ($m=2$), this reduces the operation count from $\approx 3 \times 10^4$ to $\approx 65$. For a supply chain with 3 types of components and 100 agents, the reduction is from astronomical ($2^{100}$) to manageable ($100^3$), making the Shapley value a feasible metric for large-scale economic systems.

\section{Conclusion}

In this paper, we have developed a rigorous algebraic framework for analyzing cooperative games with homogeneous groups of players. Rather than treating symmetry as a simplifying assumption, we show it is an intrinsic algebraic feature of the game space, and that the polynomial complexity results of \cite{Hernández2015} are direct consequences of the underlying module structure.

\subsection{Summary of Results}
By treating symmetry not as a simplifying assumption but as an algebraic feature, we gain the ability to dissect complex strategic interactions into their fundamental atomic components. Our primary contribution is the characterization of the spectrum of group-symmetric games $\mathcal{G}_n^H$ using the induced representation $\mathrm{Ind}_H^{S_n}(\mathbf{1})$. Using Frobenius Reciprocity and Young's Rule, we proved that this space decomposes into irreducible representations of $S_n$ corresponding only to partitions where the depth of interactions is bounded by the size of the minority group. This structural constraint has the following implications for solution concepts:

\begin{enumerate}
    \item \textbf{Filtration of Complexity:} For games with $m$ player types, the ``deep'' kernel components of $W$ (irreducibles $V_{[n-k, k]}$ where $k > n_{\min}$) are algebraically prohibited. This explains why symmetric values on such games act upon a highly restricted set of geometric coordinates.
    \item \textbf{Axiomatic Rigidity:} Classical axioms like efficiency and the null player property act as linear constraints on the surviving irreducible components ($C$ and $U$). For $m=2$, assuming linearity, these constraints leave no degrees of freedom, rendering the Shapley value the unique symmetric solution.
    \item \textbf{Computational Reduction:} The dimension of the invariant subspace grows polynomially as $O(n^m)$, validating the efficiency of lattice-based algorithms for value computation.
\end{enumerate}

\subsection{Future Directions}
Several avenues remain open. Our analysis focused on the Young subgroup $H$, which corresponds to a specific flat partition of players. A natural extension is to \textbf{wreath product symmetries} $S_k \wr S_m$, which would model games with hierarchical structures (e.g., groups of groups), typical in federal governance systems.

Second, while we focused on linear values, this decomposition could be applied to \textbf{games in partition function form}, where externalities exist between coalitions. Recent work has suggested that representation theory can characterize solutions in these more complex environments \cite{Sanchez2014}, and the induced module approach provides a natural pathway to formalize ``group externalities.''

Finally, the connection between the kernel $W$ and the \textbf{cohomology of game spaces} remains unexplored. The components of $W$ annihilated by symmetric values may carry information relevant to non-cooperative stability or core non-emptiness. Investigating these ``hidden'' subspaces using Specht modules could yield new classes of invariants for cooperative games.

\section{Acknowledgments}
I would like to deeply thank Professor Joshua Lansky of American University for his invaluable guidance and mentorship throughout the development of this paper.

\bibliographystyle{plain}
\bibliography{references}

\end{document}